\newcommand{\B}{{\cal B}}
\newcommand{\J}{{\cal J}}
\newcommand{\supp}{\mbox{supp }}
\newcommand{\spark}{\mbox{spark }}
\newcommand{\sbold}{\mathbf{s}}
\newcommand{\ybold}{\mathbf{y}}
\newcommand{\etabold}{\mathbf{\boldsymbol{\eta}}}
\newcommand{\rbold}{\mathbf{r}}
\newcommand{\beq}{\begin{equation}}
\newcommand{\eeq}{\end{equation}}
\newcommand{\bmat}{\left[ \begin{array}}
\newcommand{\emat}{\end{array} \right]}
\newcommand{\param}{\xi}
\newtheorem{definition}{Definition}[section]
\newtheorem{thm}{Theorem}[section]
\newtheorem{lem}[thm]{Lemma}
\newtheorem{cor}[thm]{Corollary}
\newtheorem{example}[thm]{Example}
\title{\LARGE \bf
Vulnerability of linear systems against sensor attacks--a system's security index
}
\author{Michelle S. Chong$^{1}$ and Margreta Kuijper$^{2}$
\thanks{$^{1}$Michelle S. Chong is with the Department of Automatic Control, Lund University, SE-221 00 Lund, Sweden.
        {\tt\small michelle.chong@control.lth.se}}%
\thanks{$^{2}$Margreta Kuijper is with the Department of Electrical and Electronic Engineering, University of Melbourne, Australia.
        {\tt\small mkuijper@unimelb.edu.au}}%
}
\begin{document}

\maketitle
\thispagestyle{empty}
\pagestyle{empty}

\begin{abstract}
The `security index' of a discrete-time LTI system under sensor attacks is introduced as a quantitative measure on the security of an observable system. We derive ideas from error control coding theory to provide sufficient conditions for attack detection and correction. 
\end{abstract}

\section{Introduction}
The security of control systems against adversarial attacks is a challenge to maintain when the adversary knows the workings of any component of the system and has garnered access, with the malicious intent of causing disruption. This has lead to a proliferation of works in tackling this issue, in particular in detecting the occurrence of an attack \cite{pasqualetti2013attack, pasqualetti2012attack, pasqualetti2015divide}, or in designing resilient control or estimation algorithms, see \cite{fawziTD14, shoukry2013event, ChongWakaikiHespanhaACC15, chenKarMouraICASSP15, sandbergTJ2010} and many more.

In this paper, we concentrate on LTI systems where the sensing component has been compromised by the attacker, who has full knowledge of the system. The vulnerability of the sensors is modelled by an additive attack signal to the sensor measurements, which is non-zero when the particular sensor is compromised. Inspired by ideas in coding theory, we introduce the notion of the `security index' for linear systems, a quantitative measure of the vulnerability of a system to sensor attacks. While ideas from error control coding have already been employed to this context in recent literature \cite{fawziTD14}, our aim is to further strengthen this link. Our notion of a `security index' is formulated based on the measurement time series from all sensors and is analogous to the notion of the `minimum distance' of a code in error control coding theory. We demonstrate that by using ideas from coding theory, the formulation simplifies the approach in \cite{fawziTD14}, leading to new results. Particularly, we express the `security index' of a system in terms of different representations of the system concerned. 
 
Previous works in state estimation for systems under sensor attacks include \cite{ChongWakaikiHespanhaACC15, fawziTD14, chenKarMouraICASSP15, pasqualetti2013attack, pasqualetti2015divide, pajic2015attack}. There is a consensus with \cite{fawziTD14} and \cite{ChongWakaikiHespanhaACC15} that the states of an LTI system can only be reconstructed if strictly less than half of the sensors are under attack. We will see in this paper that this condition is also derived when approached with ideas from coding theory. Other related works are \cite{HendrickxJJSS14, sandbergTJ2010} which focus on power networks. It is in this specialised setting that the authors of \cite{sandbergTJ2010} introduce the terminology `security index', which we adopted for a broader context in this paper. The presence of measurement noise has been considered in \cite{pajic2015attack}, which we do not consider, but is the subject of further work. 
\newline
\noindent{\em Notation}: We denote the set of integers and complex numbers as $\mathbb{Z}$ and $\mathbb{C}$, respectively. The notation $\mathbb{Z}_{+}$ is used to denote the set of positive integers including $0$.

\section{Problem formulation}
We consider a discrete-time, observable linear time-invariant (LTI) system $\Sigma$ given by a $n\times n$ state matrix $A$ and a $N\times n$ observation matrix $C$, defined as follows:
\begin{eqnarray}
x(t+1) & = &  Ax(t) \label{eq_Asystem}\\
y(t) & = &  Cx(t). \label{eq_Csystem}
\end{eqnarray}
The {\em behavior} $\B$ of the system is defined as the set of all possible output trajectories $\ybold:\mathbb{Z}_+ \mapsto \mathbb{C}^N$ that satisfy the system's equations for some initial condition $x(0) \in \mathbb{C}^n$. Due to the time-invariant finite dimensional nature of the underlying system, the behavior $\B$ has the following two properties:
\begin{itemize}
\item $\B$ is left-shift invariant: if $\ybold \in \B$ then $\sigma \ybold \in \B$, where the shift operator $\sigma$ is defined via $\sigma y(t) := y(t+1)$.
\item $\B$ is autonomous: there exists $T \in \mathbb{Z}_+$ such that for any $\ybold \in \B$ and $\tilde \ybold \in \B$ we have that $\ybold|_{[0,T] }=\tilde \ybold|_{[0,T] }$ implies that $\ybold = \tilde \ybold$.
\end{itemize} 
We assess the vulnerability of an LTI system via its measurable outputs, which may have been compromised by an attacker. While the usual assumption for many applications is that the matrix $C$ has full row rank, we do not make such an assumption in our setting. This setting occurs in the case where each sensor is measuring a local part of the system, such as in sensor networks implemented in a large geographical location. To aid in the introduction of a measure of a system's security against sensor attacks, we define the following for a system's trajectory.

\begin{definition} \label{def:supp_y_trajectory} The {\em support} of a trajectory $\ybold:\mathbb{Z}_+ \mapsto \mathbb{C}^N$, denoted by $\supp(\ybold)$, is defined as the set of indices $i$ in $\{ 0, 1, 2, \ldots , N\}$ such that its component trajectory $\ybold_i:\mathbb{Z}_+ \mapsto \mathbb{C}$ is not the zero trajectory. 
\end{definition}
\begin{definition} The {\em weight} of a trajectory $\ybold$, denoted by $\|\ybold\|$, is defined as $|\supp(\ybold)|$, i.e., the number of components of $\ybold$ that are not the zero trajectory.
\end{definition}

We now introduce a concept that is central to this paper:
\begin{definition}\label{def_security} The {\em security index} of the system $\Sigma$ is defined as 
\[
\delta (\Sigma ) := \min _{0\neq \ybold \in \B} \|\ybold\| .
\]
\end{definition}
This notion plays a paramount role in our investigation into the resilience of the system under adversarial attack. More precisely, we consider attacks on the system $\Sigma$ that result in the system $\Sigma'$ given by:
\begin{eqnarray*}
\Sigma': \; x(t+1) & = &  Ax(t) \\
r(t) & = &  Cx(t) + \eta (t),
\end{eqnarray*}
where $\etabold:\mathbb{Z}_+ \mapsto \mathbb{C}^N$ is the unknown attack signal and $\rbold:\mathbb{Z}_+ \mapsto \mathbb{C}^N$ is the known received signal. Thus we are focusing exclusively on scenarios where the system's outputs (= sensors) are attacked. The behavior $\B'$ of the system $\Sigma'$ is defined as the set of all possible trajectories $\rbold$ that satisfy the above equations for some initial condition $x(0)$ and some attack signal $\etabold$. We consider the following two problems:

\begin{quote}
{\em Problem 1 (attack detection):} Given received signal $\rbold \in \B'$, detect that $\rbold \notin \B$. 
\end{quote}
\begin{quote}
{\em Problem 2 (attack correction):} Given received signal $\rbold\in \B'$, find $\ybold \in \B$ such that $\|\rbold - \ybold\| $ is minimal. 
\end{quote}
In the sections that follow, we derive conditions such that the problems above are solvable in a tractable manner.

\section{Conditions for attack detection and correction using a system's security index}
The first question that arises is: under which conditions on the attack signal $\etabold$ are these problems solvable? We have the following results. 

\begin{thm}\label{thm_detect}
Suppose the received signal $\rbold \in \B'$ corresponds to an attack $\etabold$ with $\|\etabold\| $ an unknown non-zero value $< \delta (\Sigma  )$. Then $\rbold \notin \B$, i.e., attack detection is possible. 
\end{thm}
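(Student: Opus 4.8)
The plan is to argue by contradiction, mirroring the classical coding-theory fact that a received word carrying fewer errors than the minimum distance cannot itself be a codeword. First I would suppose, contrary to the claim, that $\rbold \in \B$. By the definition of $\B'$ there exist an initial condition $x(0)$ and the attack $\etabold$ with $r(t) = Cx(t) + \eta(t)$; writing $\ybold$ for the genuine output trajectory generated by this same $x(0)$, we have $\ybold \in \B$ and $\rbold = \ybold + \etabold$, hence $\etabold = \rbold - \ybold$.

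The key observation I would then exploit is that $\B$ is a linear subspace of the space of trajectories $\mathbb{Z}_+ \mapsto \mathbb{C}^N$. This follows directly from the linearity of \eqref{eq_Asystem}--\eqref{eq_Csystem}: superposing initial conditions superposes the corresponding output trajectories, so $\B$ is closed under addition and scalar multiplication. Consequently, if both $\rbold$ and $\ybold$ lie in $\B$, then so does their difference $\etabold = \rbold - \ybold$.

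The final step combines this with the hypothesis on the weight. Since $\|\etabold\|$ is assumed non-zero, $\etabold$ is a non-zero element of $\B$, and Definition~\ref{def_security} then forces $\|\etabold\| \geq \delta(\Sigma)$, because $\delta(\Sigma)$ is by definition the minimum weight attained over all non-zero trajectories in $\B$. This contradicts the standing assumption $\|\etabold\| < \delta(\Sigma)$. Hence no such $x(0)$ can exist and $\rbold \notin \B$, which is exactly attack detection.

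I do not anticipate a serious obstacle here; the argument is essentially a single application of the subspace structure of $\B$ together with the minimality defining $\delta(\Sigma)$. The one point requiring a little care is to state explicitly \emph{why} $\B$ is closed under subtraction, i.e.\ to record the linearity of the behavior, since the whole contradiction hinges on $\etabold$ inheriting membership in $\B$ from $\rbold$ and $\ybold$. Everything else is definitional.
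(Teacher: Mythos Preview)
Your argument is correct and is essentially the same as the paper's: both hinge on writing $\rbold = \ybold + \etabold$ with $\ybold \in \B$, invoking the linearity of $\B$, and observing that a non-zero $\etabold$ of weight below $\delta(\Sigma)$ cannot lie in $\B$. The only cosmetic difference is that the paper argues directly (first $\etabold = \rbold - \ybold \notin \B$, then $\rbold \notin \B$ by linearity) whereas you phrase it as a contradiction; also, your closing phrase ``no such $x(0)$ can exist'' is slightly off, since $x(0)$ is fixed by the hypothesis and the contradiction is with $\rbold \in \B$, not with the existence of $x(0)$.
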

\begin{proof}
Let $\rbold$ and $\etabold$ be as stated in the theorem and let $\ybold \in \B$ be such that $\rbold = \ybold + \etabold$. Then $\rbold - \ybold \notin \B$ because of Definition~\ref{def_security} and the assumption that $0 \neq \|\etabold\| < \delta (\Sigma  )$. Since $\B$ is linear it then follows that $\rbold \notin \B$.	
\end{proof}

Consequently, we can interpret the security index $\delta (\Sigma  )$ as the minimum number of sensors that an attacker needs to compromise without being detected. We call the system $\Sigma $ {\em maximally secure} if $\delta (\Sigma  ) = N$. It is easily seen that generically, systems described by equations~\eqref{eq_Asystem}-\eqref{eq_Csystem} are maximally secure.

\begin{thm}\label{thm_correct}
Suppose the received signal $\rbold \in \B'$ corresponds to an attack $\etabold$ with $\|\etabold\| $ an unknown value $< \delta (\Sigma  )/2$. Then there exists a unique $\ybold \in \B$ such that $\|\rbold - \ybold\| $ is minimal, i.e. unique attack correction is possible.   
\end{thm}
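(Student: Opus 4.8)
The plan is to adapt the classical minimum-distance decoding argument from error control coding, treating $\B$ as a linear code whose minimum distance is the security index $\delta(\Sigma)$. The central tool will be a triangle inequality for the weight: since $\supp(\ybold + \wbold) \subseteq \supp(\ybold) \cup \supp(\wbold)$ for any two trajectories, we have $\|\ybold + \wbold\| \le \|\ybold\| + \|\wbold\|$. I would first record this inequality, together with the linearity of $\B$ (so that $\ybold_0, \ybold_1 \in \B$ implies $\ybold_0 - \ybold_1 \in \B$), since both facts are used repeatedly and are exactly what lets the coding-theoretic packing-radius bound go through in this trajectory setting.

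For existence, I would observe that the candidate $\ybold_0 := \rbold - \etabold$ lies in $\B$ and satisfies $\|\rbold - \ybold_0\| = \|\etabold\| < \delta(\Sigma)/2$. Because each weight $\|\rbold - \ybold\|$ is an integer in $\{0, 1, \ldots, N\}$, the quantity $\min_{\ybold \in \B}\|\rbold - \ybold\|$ is the minimum of a non-empty set of non-negative integers and is therefore attained; moreover its value is strictly below $\delta(\Sigma)/2$. Existence thus comes essentially for free from the boundedness and integrality of the weight.

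For uniqueness, I would argue by contradiction. Suppose $\ybold_0$ and $\ybold_1$ both attain the minimum, so $\|\rbold - \ybold_0\| < \delta(\Sigma)/2$ and $\|\rbold - \ybold_1\| < \delta(\Sigma)/2$. Writing $\ybold_0 - \ybold_1 = (\ybold_0 - \rbold) + (\rbold - \ybold_1)$ and applying the triangle inequality gives $\|\ybold_0 - \ybold_1\| < \delta(\Sigma)/2 + \delta(\Sigma)/2 = \delta(\Sigma)$. But $\ybold_0 - \ybold_1 \in \B$ by linearity, so if it were non-zero its weight would be at least $\delta(\Sigma)$ by Definition~\ref{def_security}, a contradiction. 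Hence $\ybold_0 = \ybold_1$, and the unique minimizer coincides with the true trajectory $\rbold - \etabold$.

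I expect no serious obstacle; the argument is the standard statement that a code of minimum distance $\delta$ corrects fewer than $\delta/2$ errors. The one point that deserves care is confirming that the weight genuinely behaves like a metric in this setting: the support-union inclusion must hold for entire component time series, not merely at a single time instant, so that the triangle inequality step is legitimate. Verifying that the weight inherits this subadditivity from the componentwise definition of $\supp$ is the only place where the trajectory formulation differs from the finite-word coding picture, and it is where I would concentrate the rigor.
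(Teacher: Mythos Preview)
Your proposal is correct and follows essentially the same minimum-distance decoding argument as the paper: both use linearity of $\B$ together with the subadditivity of the weight to conclude that two candidates within weight $\delta(\Sigma)/2$ of $\rbold$ must differ by a trajectory in $\B$ of weight strictly below $\delta(\Sigma)$, hence coincide. Your write-up is in fact a bit more careful than the paper's, since you make the triangle inequality and the existence of the minimizer (via integrality and boundedness of the weight) explicit, whereas the paper leaves these implicit.
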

\begin{proof}
		Let $\rbold$ be as stated in the theorem and let $\ybold_1 \in \B$ and $\etabold_1$ be such that $\rbold = \ybold_1 + \etabold_1$ with $\|\etabold_1 \| < \delta (\Sigma  )/2$. Suppose that there also exist $\ybold_2 \in \B$ and $\etabold_2$ such that $\rbold = \ybold_2 + \etabold_2$ with $\|\etabold_2 \| < \delta (\Sigma  )/2$. Clearly $\|\etabold_1 - \etabold_2\| < \delta (\Sigma  )$, so that $\ybold_1 - \ybold_2 = \etabold_1 - \etabold_2$ is a trajectory in $\B$ of weight $< \delta (\Sigma  )$. Definition~\ref{def_security} now implies that 
		$\ybold_1 - \ybold_2$ is the zero trajectory, in other words, $\ybold_1 =\ybold_2$ is unique.  It follows that $\|\rbold - \ybold_1\| = \|\etabold_1 \| < \delta (\Sigma  )/2$ is minimal.
\end{proof}

We have formulated system $\Sigma$'s security index $\delta(\Sigma)$ and results in therms of the output trajectory $\ybold$, instead of $\Sigma$'s initial condition $x(0)$ to transparently draw an analogy with error control coding. In fact, this choice is natural because the recovery of $x(0)$ is equivalent to having $\ybold \in \B$ during attack correction, due to the observability assumption on system $\Sigma$.

\section{Computing a system's security index}
In this section, we show how system $\Sigma$'s security index $\delta(\Sigma)$ can be computed. To this end, we introduce the {\em coding matrix} $G$ of the system $\Sigma$ defined as follows 
\beq
G = \bmat{c} G_1 \\ G_2 \\ \vdots \\ G_N \emat,\;\;\;\mbox{where }G_i := \bmat{c} C_i \\
C_iA
\\
C_iA^2 \\
\vdots \\
C_iA^{n-1}
\emat\label{eq_Gi}
\eeq
with $C_i$ defined as the $i$'th row of $C$ for $i=1, \ldots , N$.

We call the above matrix $G$ the {\em coding matrix} of the system as it exhibits the link with error control coding \cite{leeSE2015ECC}. Note however in contrast to error control coding, the coding matrix $G$ cannot be chosen freely---instead, it is fixed and given by the system $\Sigma$. In particular, the number of sensors $N$ is fixed.  In the following theorem, we use $G_\J $ to denote the matrix that is obtained by stacking the matrices $G_i$ defined in~(\ref{eq_Gi}), for $i\in \J \subseteq \{1,\ldots , N\}$.

\begin{thm}\label{thm_GJ}
\[
\delta (\Sigma ) = N-L, 
\]
where $L$ is the largest integer in $\{0,1,\ldots , N\}$ for which there exists a subset $\J$ of $\{1,\ldots , N\}$ of cardinality $L$ such that $\ker G_\J \neq \{ 0 \}$.
\end{thm}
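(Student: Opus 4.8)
The plan is to reduce the trajectory-level quantity $\delta(\Sigma)$ to a purely linear-algebraic statement about the kernels of the blocks $G_\J$, by parametrizing trajectories through their initial condition. First I would record the correspondence between nonzero trajectories and nonzero initial conditions. Since $y(t) = CA^t x(0)$, the coding matrix $G$ is a row permutation of the observability matrix $\COL(C, CA, \ldots, CA^{n-1})$, and so $\ker G = \{0\}$ by the observability of $\Sigma$ (this is the autonomy property of $\B$ in linear-algebraic form). Consequently $\ybold$ is the zero trajectory if and only if $x(0) = 0$, and the minimization $\min_{0 \neq \ybold \in \B} \|\ybold\|$ may equivalently be carried out over $0 \neq x(0) \in \mathbb{C}^n$.

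Next I would characterize the support componentwise. The $i$-th component trajectory is $y_i(t) = C_i A^t x(0)$, so it is the zero trajectory precisely when $C_i A^t x(0) = 0$ for all $t \in \mathbb{Z}_+$. By the Cayley--Hamilton theorem, $A^t$ for $t \geq n$ is a linear combination of $I, A, \ldots, A^{n-1}$, hence this holds if and only if $C_i A^t x(0) = 0$ for $t = 0, 1, \ldots, n-1$, that is, if and only if $x(0) \in \ker G_i$. Therefore $i \notin \supp(\ybold)$ exactly when $x(0) \in \ker G_i$, which yields the weight formula $\|\ybold\| = N - |\{\, i : x(0) \in \ker G_i \,\}|$.

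Then I would reformulate the optimization. For fixed $x(0)$, write $\J(x(0)) := \{\, i : x(0) \in \ker G_i \,\}$; since $x(0) \in \ker G_i$ for every $i \in \J$ is the same as $x(0) \in \ker G_\J$, this set is the largest index set whose block annihilates $x(0)$. Minimizing the weight thus amounts to maximizing the number of silent sensors, giving
\[
\delta(\Sigma) = N - \max_{0 \neq x(0)} |\J(x(0))| .
\]
To finish I would show $\max_{0 \neq x(0)} |\J(x(0))| = L$ by two inequalities. For ``$\geq$'': any $\J$ of cardinality $L$ with $\ker G_\J \neq \{0\}$ supplies a nonzero $x(0) \in \ker G_\J$ with $\J \subseteq \J(x(0))$, so $|\J(x(0))| \geq L$. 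For ``$\leq$'': an optimal $x(0) \neq 0$ gives $x(0) \in \ker G_{\J(x(0))}$, so $\ker G_{\J(x(0))} \neq \{0\}$ and hence $|\J(x(0))| \leq L$ by the maximality defining $L$.

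The main obstacle is the support characterization: one must argue that a single finite observability block $G_i$ faithfully captures the vanishing of the entire infinite-horizon component trajectory $\ybold_i$, and this is exactly where Cayley--Hamilton (equivalently, the finite-dimensional autonomy of $\B$) does the real work. Once that bridge is in place, the passage to the coding matrix and the remaining min--max duality are a routine two-sided counting argument.
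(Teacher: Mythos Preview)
Your proposal is correct and follows essentially the same approach as the paper: both parametrize trajectories by initial conditions via observability, invoke Cayley--Hamilton to identify the vanishing of $\ybold_i$ with $x(0)\in\ker G_i$, and finish with a two-sided inequality using the maximality of $L$. Your write-up is simply more explicit where the paper is terse (in particular, the paper compresses your ``$\leq$'' direction into the single clause ``by definition of $L$, trajectories in $\B$ cannot have more than $L$ of their component trajectories equal to the zero trajectory'').
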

\begin{proof}
		Let $\J$ be a subset of $\{0,1,\ldots , N\}$ of cardinality $L$ such that $\ker G_\J \neq \{ 0 \}$. Let $x(0) \neq 0$ be such that $G_\J  x(0) =0$. Then $Gx(0) \neq 0$  because of left invertibility of $G$. Furthermore, the trajectory $\ybold$ that corresponds to initial condition $x(0)$ satisfies $\ybold_\J = 0$ (use Cayley-Hamilton). Thus $\ybold$ is a trajectory in $\B$ of weight $\leq N-L$, so that $\delta (\Sigma ) \leq N-L$. Further, by definition of $L$, trajectories in $\B$ cannot have more than $L$ of their component trajectories equal to the zero trajectory. Therefore $\delta (\Sigma ) = N-L$.
\end{proof}
\begin{cor} \label{cor:max_secure_G}
The system $\Sigma$ given by equations~(\ref{eq_Asystem})-(\ref{eq_Csystem}) is maximally secure if and only if each matrix $G_i$, as defined in~(\ref{eq_Gi}), has full column rank ($i=1,2,\ldots , N$).
\end{cor}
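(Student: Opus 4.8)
The plan is to obtain this as a direct consequence of Theorem~\ref{thm_GJ}. By the definition of maximal security, $\Sigma$ is maximally secure exactly when $\delta(\Sigma)=N$, while Theorem~\ref{thm_GJ} gives $\delta(\Sigma)=N-L$. Hence maximal security is equivalent to $L=0$, and the whole corollary reduces to showing that $L=0$ if and only if $\ker G_i=\{0\}$ for every $i\in\{1,\ldots,N\}$, which in turn is the same as each $G_i$ having full column rank since $G_i$ stacks $C_i,C_iA,\ldots,C_iA^{n-1}$ and therefore has exactly $n$ columns.

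First I would unwind the definition of $L$. Since $L$ is the largest cardinality of a subset $\J\subseteq\{1,\ldots,N\}$ with $\ker G_\J\neq\{0\}$, the statement $L=0$ means precisely that no nonempty index set yields a nontrivial kernel, i.e.\ $\ker G_\J=\{0\}$ for every $\J$ with $|\J|\geq 1$. So the task becomes equating this family of conditions, ranging over all nonempty $\J$, with the single-sensor conditions $\ker G_i=\{0\}$.

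The key observation is a monotonicity of kernels under stacking rows: whenever $i\in\J$, the matrix $G_i$ is a row-submatrix of $G_\J$, so every vector killed by $G_\J$ is killed by $G_i$, giving $\ker G_\J\subseteq\ker G_i$. The forward direction is then immediate, since singletons are themselves nonempty index sets. For the converse, suppose each singleton satisfies $\ker G_i=\{0\}$; then for any nonempty $\J$ I pick some $i\in\J$ and conclude $\ker G_\J\subseteq\ker G_i=\{0\}$, so $L=0$ and $\Sigma$ is maximally secure. Contrapositively, if some $G_{i_0}$ has a nonzero kernel, the singleton $\{i_0\}$ already witnesses $L\geq 1$, whence $\delta(\Sigma)=N-L<N$.

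The final link is purely linear-algebraic: because each $G_i$ is an $n$-column matrix, $\ker G_i=\{0\}$ is equivalent to $G_i$ having full column rank. Combining this equivalence with the two implications above closes the argument. I do not anticipate a genuine obstacle; the only point that warrants care is the reduction from arbitrary subsets to singletons, which rests entirely on the containment $\ker G_\J\subseteq\ker G_i$ and on reading off $L$ over nonempty index sets.
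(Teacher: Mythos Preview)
Your argument is correct and follows exactly the route the paper intends: the corollary is stated without proof as an immediate consequence of Theorem~\ref{thm_GJ}, and your reduction of $\delta(\Sigma)=N$ to $L=0$, together with the kernel monotonicity $\ker G_\J\subseteq\ker G_i$ for $i\in\J$, is precisely the intended unpacking. There is nothing to add.
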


Not surprisingly, a system is maximally secure if and only if the system is observable via each sensor. In this case, we obtain from Theorem \ref{thm_correct} that the sufficient condition for the attack on the system to be correctable is that the number of compromised sensors is strictly less than half of the total number of sensors. This conforms with the results in \cite{fawziTD14} and \cite{ChongWakaikiHespanhaACC15} for discrete-time and continuous-time LTI systems, respectively.

We further provide ways of computing a system's security index $\delta(\Sigma)$. Since we assume that the LTI system is observable and hence the $nN \times n$ matrix $G$ given by~(\ref{eq_Gi}) has full rank, there exists a full rank $(n(N - 1) \times nN)$ matrix $H$, written as 
\beq
H = \bmat{cccc} H_1 & H_2 & \cdots & H_N \emat ,\label{eq_Hi}
\eeq
such that $HG = 0$; mindful of the analogous coding theoretic terminology, in this paper we call such a matrix $H$ a {\em check matrix} of the system. In the next theorem $H_\J $ denotes the matrix that is obtained by juxtaposing the matrices $H_i$ defined in~(\ref{eq_Hi}), for $i\in \J \subset \{1,\ldots , N\}$.
\begin{thm}\label{thm_HJ}
\[
\delta (\Sigma ) = \spark(H), 
\]
where $\spark(H)$ is defined as the smallest integer $L$ in $\{1,\ldots , N\}$ for which there exists a subset $\J$ of $\{1,\ldots , N\}$ of cardinality $L$ such that $\ker H_\J \neq \{ 0 \}$. 
\end{thm}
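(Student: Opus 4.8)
The plan is to route both sides of the claimed identity through a single minimum-weight quantity attached to $\ker H$, thereby turning the statement into the block-analogue of the classical coding fact that the minimum distance of a code equals the smallest number of linearly dependent columns of its parity-check matrix. First I would recall the observation already used in the proof of Theorem~\ref{thm_GJ}: by Cayley--Hamilton the $i$'th component trajectory $\ybold_i$ generated by an initial condition $x(0)$ is the zero trajectory if and only if $G_i x(0) = 0$. Hence, viewing $Gx(0) \in \mathbb{C}^{nN}$ as partitioned into $N$ blocks $G_i x(0) \in \mathbb{C}^n$, the weight $\|\ybold\|$ equals the number of nonzero blocks of $Gx(0)$, which I will call its \emph{block-weight}.

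Next I would establish that $\IM G = \ker H$. Since $HG = 0$ we have $\IM G \subseteq \ker H$; observability makes $G$ injective, so $\DIM (\IM G) = n$, while $H$ having full rank $n(N-1)$ forces $\DIM \ker H = nN - n(N-1) = n$, and equality follows by the dimension count. Because $G$ is injective, the map $x(0) \mapsto Gx(0)$ is a bijection from $\mathbb{C}^n$ onto $\ker H$ preserving block-support, so that
\[
\delta(\Sigma) = \min_{0 \neq \ybold \in \B} \|\ybold\| = \min_{0 \neq z \in \ker H}\; (\mbox{block-weight of } z).
\]

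It then remains to identify this minimum block-weight over $\ker H$ with $\spark(H)$. For one inequality I would take a nonzero $z \in \ker H$ of minimum block-weight, let $\J$ be its block-support, and note that the restriction of $z$ to the blocks in $\J$ is a nonzero element of $\ker H_\J$, whence $\spark(H) \leq |\J|$. For the reverse I would take $\J$ of cardinality $\spark(H)$ with $\ker H_\J \neq \{0\}$, pick a nonzero $w \in \ker H_\J$, and zero-pad it to a vector $z \in \ker H$ supported within $\J$, giving a nonzero kernel element of block-weight $\leq \spark(H)$; combining the two inequalities yields the desired equality.

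The one point requiring care --- the main, if modest, obstacle --- is the gap between the \emph{defining} condition $\ker H_\J \neq \{0\}$, whose witness need not have all of its blocks nonzero, and the block-weight, which counts only nonzero blocks. The two-sided argument above is exactly what resolves this: a kernel witness with a vanishing block can always be restricted to a strictly smaller index set, so the spark is in fact attained by a witness of full block-support, which is precisely what corresponds to a genuine minimum-weight trajectory in $\B$. I would also remark that the result is consistent with Theorem~\ref{thm_GJ}, since $\J$ is a block-dependency of $H$ exactly when its complement carries a nonzero $x(0)$ with $G_{\J^c} x(0)=0$.
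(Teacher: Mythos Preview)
Your argument is correct and follows essentially the same route as the paper's: zero-pad a nonzero element of $\ker H_\J$ to obtain a low-weight trajectory in $\B$, and conversely restrict a minimum-weight trajectory to its support to witness a block-dependency in $H$. You make explicit a step the paper leaves implicit---the identification $\IM G = \ker H$ via rank/dimension counting---and your block-weight formulation is cleaner than the paper's somewhat loose talk of ``trajectories'' $\ybold_\J$ satisfying $H_\J \ybold_\J = 0$, but the underlying idea is the same.
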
 
\begin{proof}
		Let $\J$ be a subset of $\{1,\ldots , N\}$ of cardinality $L$ such that $\ker H_\J \neq \{ 0 \}$. Let $\ybold_\J$ be a non-zero trajectory such that $H_\J \ybold_\J =0$. Let $\ybold $ be the trajectory that coincides with $\ybold_\J$ at the appropriate locations and that has zero component trajectories at all other locations. Then $\ybold$ is a trajectory in $\B$ of weight $\leq \spark(H)$, so that $\delta (\Sigma ) \leq \spark(H)$. Further, by definition of $\spark(H)$, trajectories in $\B$ have weight $\geq \spark(H)$ (use Cayley-Hamilton) and therefore $\delta (\Sigma ) = \spark(H) $. 
\end{proof}

The terminology "spark" stems from the compressed sensing literature~\cite{TillmannP14}.

\begin{cor}
The system $\Sigma$ given by equations~(\ref{eq_Asystem})-(\ref{eq_Csystem}) is maximally secure if and only if all square $n(N - 1) \times n(N - 1)$ submatrices of $H$ of the form $H_\J $, where $\J$ is a subset of $\{1,\ldots , N\}$ of cardinality $N-1$, are nonsingular.
\end{cor}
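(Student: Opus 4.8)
The plan is to combine Theorem~\ref{thm_HJ}, which identifies $\delta(\Sigma)$ with $\spark(H)$, with the definition of \emph{maximal security}, namely $\delta(\Sigma)=N$. Reading the corollary through this lens, the statement $\delta(\Sigma)=N$ is equivalent to $\spark(H)=N$, and unwinding the definition of spark, this says that the smallest cardinality $L$ of a subset $\J\subseteq\{1,\ldots,N\}$ with $\ker H_\J\neq\{0\}$ equals $N$. So the whole corollary reduces to the equivalence ``$\spark(H)=N$'' $\iff$ ``every $H_\J$ with $|\J|=N-1$ has trivial kernel.'' First I would make this translation explicit, and then prove the two directions of the resulting combinatorial statement about kernels of column-subselected blocks of $H$.

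For the forward direction, I would argue by contraposition: if some $H_\J$ with $|\J|=N-1$ has a nontrivial kernel, then there is a subset of cardinality $N-1<N$ witnessing a nonzero kernel, so $\spark(H)\leq N-1<N$, contradicting maximal security. For the reverse direction, suppose every $(N-1)$-block $H_\J$ has trivial kernel; I would show no proper subset $\J$ (i.e.\ $|\J|\leq N-1$) can have a nontrivial kernel, because any such $\J$ is contained in some $\J'$ with $|\J'|=N-1$, and $\ker H_\J\neq\{0\}$ would force $\ker H_{\J'}\neq\{0\}$ by monotonicity of kernels under adding more block-columns (adding columns can only enlarge the map's domain in a way that a witnessing vector, padded with zeros, still lies in the kernel). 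Hence the smallest $L$ achieving a nontrivial kernel is $N$ itself, giving $\spark(H)=N$.

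The one genuine point to pin down is the reformulation ``trivial kernel of $H_\J$'' $\iff$ ``nonsingularity of $H_\J$,'' for which I would invoke the dimension count: $H$ is the full-rank $n(N-1)\times nN$ check matrix, and selecting $N-1$ of the $N$ blocks $H_i$ (each of width $n$) yields a square $n(N-1)\times n(N-1)$ matrix $H_\J$. For a square matrix, $\ker H_\J=\{0\}$ is exactly nonsingularity, so the two phrasings coincide. This is where I would be slightly careful: the corollary is stated purely in terms of nonsingularity of square submatrices, so I must confirm that the relevant subsets are precisely the $(N-1)$-element ones and that $n(N-1)$ is indeed both the row count of $H$ and the column count of each such $H_\J$.

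I expect no serious obstacle here; the result is essentially a dictionary entry translating the spark characterization of Theorem~\ref{thm_HJ} into the language of nonsingular submatrices. The only subtlety worth a sentence is the monotonicity argument in the reverse direction, making sure that a nontrivial kernel of a smaller block forces one in any enclosing $(N-1)$-block, so that ruling out kernels at cardinality $N-1$ simultaneously rules them out at all smaller cardinalities and thereby forces $\spark(H)=N$.
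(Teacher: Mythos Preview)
Your proposal is correct and matches the paper's intended approach: the corollary is stated in the paper without proof, as an immediate consequence of Theorem~\ref{thm_HJ}, and your argument is precisely the natural unpacking of that implication. The monotonicity observation and the square-matrix dimension count you flag are exactly the details one needs to fill in.
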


An alternative representation of the system $\Sigma$ is given by a set of $N$ difference equations
\beq
R(\sigma ) \ybold = 0 ,\label{eq_Rkernel}
\eeq
where $R(\xi)$ is a $N\times N$ polynomial matrix and $\sigma$ represents the left shift, as before.
In the special case where $R(\param)$ corresponds to a minimal lag representation, its $N$ row degrees are the observability indices of the system $\Sigma$. 

Recall that a square polynomial matrix is called {\em unimodular} if it has a polynomial inverse; a nonsquare polynomial matrix is called {\em left unimodular} if it has a polynomial left inverse. Two polynomial matrices $R(\param)$ and $Q(\param)$ of the same size are called {\em left unimodularly equivalent} if there exists a unimodular matrix $U(\param)$ such that $Q(\param) = U(\param) R(\param)$.

\begin{thm}\label{thm_RJ}
Let the system $\Sigma$ be given by~(\ref{eq_Rkernel}). Then its security index $\delta (\Sigma ) $ is given by the smallest integer $L$ in $\{1,\ldots , N\}$ for which there exists a subset $\J$ of $\{1,\ldots , N\}$ of cardinality $L$ such that $R_\J(\xi)$ is not left unimodular (here $R_\J(\xi)$ denotes the matrix that consists of the $i$th columns of $R(\xi)$ where $i\in \J$).
\end{thm}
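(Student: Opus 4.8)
The plan is to reduce the statement to a single equivalence about ``supported'' trajectories, in the same spirit as the proofs of Theorems~\ref{thm_GJ} and~\ref{thm_HJ}. Writing the behavior as $\B = \{ \ybold : \mathbb{Z}_+ \to \mathbb{C}^N : R(\sigma)\ybold = 0 \}$, I first note that since $\B$ is autonomous the $N\times N$ matrix $R(\param)$ is nonsingular, so that for a proper subset $\J$ the matrix $R_\J(\param)$ is genuinely tall and ``left unimodular'' is the relevant notion. For $\J \subseteq \{1,\ldots,N\}$ I write $\ybold_\J$ for the subvector of components indexed by $\J$, so that $R(\sigma)\ybold = R_\J(\sigma)\ybold_\J$ whenever $\supp(\ybold)\subseteq \J$. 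The whole theorem will follow from the claim: \emph{there exists a nonzero $\ybold\in\B$ with $\supp(\ybold)\subseteq\J$ if and only if $R_\J(\param)$ is not left unimodular.} Granting this, $\delta(\Sigma)$ equals the least $|\J|$ for which such a supported trajectory exists: taking $\J$ to be the support of a minimum-weight trajectory gives $\min_\J|\J|\le\delta(\Sigma)$, while extracting from a minimal such $\J$ a trajectory of weight $\le|\J|$ gives the reverse inequality. By the claim this least cardinality is exactly the integer $L$ of the statement.

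For the direction ``not left unimodular $\Rightarrow$ supported trajectory'', I would invoke the standard polynomial-matrix fact that a tall matrix $R_\J(\param)$ is left unimodular precisely when it has full column rank at every $\lambda\in\mathbb{C}$ (equivalently, all its invariant factors equal $1$). Hence if $R_\J(\param)$ is not left unimodular there is some $\lambda_0\in\mathbb{C}$ and some $0\neq v_\J$ with $R_\J(\lambda_0)v_\J = 0$. Extending $v_\J$ by zeros to $v\in\mathbb{C}^N$ gives $R(\lambda_0)v = R_\J(\lambda_0)v_\J = 0$, and the exponential trajectory $\ybold(t)=\lambda_0^{\,t}v$ then satisfies $R(\sigma)\ybold = \lambda_0^{\,t}R(\lambda_0)v = 0$, so $\ybold\in\B$; it is nonzero with $\supp(\ybold)=\supp(v)\subseteq\J$. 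The degenerate case $\lambda_0=0$ is covered by the trajectory that equals $v$ at $t=0$ and vanishes afterwards, which lies in $\B$ exactly when the constant term $R(0)v=0$.

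For the converse I argue by contraposition: if $R_\J(\param)$ \emph{is} left unimodular, pick a polynomial left inverse $M(\param)$ with $M(\param)R_\J(\param)=I$. For any $\ybold\in\B$ with $\supp(\ybold)\subseteq\J$ we have $R_\J(\sigma)\ybold_\J = R(\sigma)\ybold = 0$, and applying the operator $M(\sigma)$ yields $\ybold_\J = M(\sigma)R_\J(\sigma)\ybold_\J = 0$; since $M(\sigma)R_\J(\sigma)$ reduces to the identity operator this is a valid pointwise identity on all of $\mathbb{Z}_+$, so $\ybold=0$. Combining the two directions establishes the claim and hence the theorem. The step I expect to be the main obstacle is the first direction: it rests on the characterisation of left unimodularity by pointwise full column rank over $\mathbb{C}$, and on verifying that the resulting exponential solution is a genuine half-line trajectory in $\B$ (including the $\lambda_0=0$ corner case); the converse, by contrast, is a routine application of the polynomial left inverse.
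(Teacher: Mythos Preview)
Your proof is correct and follows the same skeleton as the paper's: both reduce the theorem to the equivalence ``there exists a nonzero $\ybold\in\B$ with $\supp(\ybold)\subseteq\J$ if and only if $R_\J(\param)$ is not left unimodular'', and then read off $\delta(\Sigma)=L$ from minimality. The only difference is that the paper simply asserts this equivalence (it is the standard behavioral fact that $P(\sigma)$ acts injectively on trajectories iff $P(\param)$ is left prime), whereas you prove it from scratch---constructing an explicit exponential witness $\ybold(t)=\lambda_0^{\,t}v$ from a pointwise rank drop in one direction, and applying a polynomial left inverse $M(\sigma)$ in the other---so your version is more self-contained while the paper's is shorter.
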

\begin{proof}
		Let $\J$ be a subset of $\{1,\ldots , N\}$ of cardinality $L$ such that $R_\J(\xi)$ is not left unimodular. Then there exists $\ybold_{\J} \neq 0$ such that $R_\J \ybold_{\J} =0$. Let $\ybold $ be the trajectory that coincides with $\ybold_\J$ at the appropriate locations and that has zero component trajectories at all other locations. Then $\ybold$ is a trajectory in $\B$ of weight $\leq L$, so that $\delta (\Sigma ) \leq L$. By definition of $L$ we must also have $\delta (\Sigma ) \geq L$ which proves the theorem.
\end{proof}

In the remainder of this section, we show that the system $\Sigma$'s security index $\delta(\Sigma)$ can be computed more easily by exploiting the system's special structure. Below, $\supp(y)$ denotes the support of a vector $y \in {\mathbb C}^N$, i.e. the set of indices $i$ in $\{ 0,1, 2, \ldots , N\}$ such that $y_i \neq 0$. As mentioned in Definition \ref{def:supp_y_trajectory}, $\supp(\ybold)$ denotes the support of a trajectory $\ybold$, i.e. the set of indices $i$ in $\{ 0,1, 2, \ldots , N\}$ such that $\ybold_i $ is not the zero trajectory. 

\begin{lem}\label{lemma_eigen}
Let the system $\Sigma$ be given by equations~(\ref{eq_Asystem})-(\ref{eq_Csystem}). Let $x(0)$ be a linear combination of eigenvectors of $A$ that correspond to different eigenvalues, so
\beq
x(0) = \alpha_1 v_1 + \alpha_2 v_2 + \cdots + \alpha_m v_m ,\label{eq_alphas}
\eeq
where $0 \neq \alpha_j \in {\mathbb C}$, $Av_j = \lambda_j v_j$ for $j=1,2,\ldots ,m$ and $\lambda_j\neq \lambda_i$ for $j\neq i$.
Let $\ybold$ be the trajectory in the behavior $\B$ of $\Sigma$ that corresponds to $x(0)$. Then
\beq
\supp ( \ybold ) = \cup_{j=1}^m \supp (Cv_j ) . \label{resultSupp}
\eeq
\end{lem}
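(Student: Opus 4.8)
The plan is to compute the output trajectory $\ybold$ explicitly from the eigendecomposition of the initial condition and then read off which components are identically zero. Since $x(0) = \sum_{j=1}^m \alpha_j v_j$ with $A v_j = \lambda_j v_j$, the state evolves as $x(t) = A^t x(0) = \sum_{j=1}^m \alpha_j \lambda_j^t v_j$, and hence the $i$th output component is
\[
\ybold_i(t) = C_i x(t) = \sum_{j=1}^m \alpha_j \lambda_j^t \, (C_i v_j),
\]
where $C_i$ is the $i$th row of $C$. So for each fixed sensor index $i$, the component trajectory $\ybold_i$ is a linear combination of the scalar exponential sequences $t \mapsto \lambda_j^t$ with coefficients $\alpha_j (C_i v_j)$. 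The whole statement reduces to determining exactly when this scalar sequence is the zero sequence.

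First I would establish the key linear-algebra fact that the exponential sequences $\{(\lambda_j^t)_{t\geq 0}\}_{j=1}^m$ are linearly independent over $\mathbb{C}$ precisely because the $\lambda_j$ are distinct. The cleanest argument is to evaluate at $t=0,1,\ldots,m-1$ and invoke the nonsingularity of the resulting Vandermonde matrix built from the distinct $\lambda_j$. Consequently $\ybold_i \equiv 0$ if and only if every coefficient $\alpha_j (C_i v_j)$ vanishes, and since each $\alpha_j \neq 0$ by hypothesis, this happens if and only if $C_i v_j = 0$ for all $j = 1,\ldots,m$.

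The final step is to translate this component-wise characterization into the support identity~\eqref{resultSupp}. The negation of the above gives: $i \in \supp(\ybold)$ if and only if there exists some $j$ with $C_i v_j \neq 0$, i.e. $i \in \supp(Cv_j)$ for at least one $j$. This is exactly the statement that $\supp(\ybold) = \bigcup_{j=1}^m \supp(Cv_j)$, so the two sets coincide.

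I do not expect a serious obstacle here; the result is essentially a clean consequence of linear independence of distinct exponentials. The one point requiring care is the linear-independence claim itself---one must justify that no nontrivial $\mathbb{C}$-linear combination of the sequences $(\lambda_j^t)_t$ can vanish identically, which is where the distinctness of the eigenvalues is genuinely used. The Vandermonde evaluation handles this cleanly. A secondary subtlety worth a sentence is the role of $\alpha_j \neq 0$: without it, a coefficient could vanish for the trivial reason that $\alpha_j = 0$ rather than $C_i v_j = 0$, which would break the clean union formula.
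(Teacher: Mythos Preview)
Your proposal is correct and follows essentially the same route as the paper: both compute $\ybold_i(t)=\sum_j \alpha_j\lambda_j^t(C_iv_j)$ and invoke the Vandermonde structure (distinct $\lambda_j$) together with $\alpha_j\neq 0$ to conclude that $\ybold_i\equiv 0$ iff $C_iv_j=0$ for all $j$. The only cosmetic difference is that the paper proves the two inclusions of~\eqref{resultSupp} separately, whereas you package them into a single ``iff'' characterization.
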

\begin{proof}
It follows immediately from the definition of $\ybold$ that 
\beq
\supp ( \ybold ) \subset \cup_{j=1}^m \supp (Cv_j ) . \label{pfSupp1}
\eeq
To prove the reverse inclusion, let $\ell \in \cup_{j=1}^m \supp (Cv_j ) $. Then there exists $i \in \{1,2,\ldots , m\}$ such that $C_\ell v_i \neq 0$. Since $\alpha_i$ is assumed non-zero it follows that $\alpha_i C_\ell v_i \neq 0$. Then
\[
\bmat{c} y_\ell(0) \\ y_\ell(1) \\y_\ell(2) \\ \vdots \emat = \bmat{cccc}
1& 1 & \cdots & 1 \\
\lambda_1 & \lambda_2 & \cdots & \lambda_m \\
\lambda_1^2 & \lambda_2^2 & \cdots & \lambda_m^2 \\
\vdots & \vdots & \vdots & \vdots  \emat 
\bmat{c}
\alpha_1 C_\ell v_1 \\
\alpha_2 C_\ell v_2 \\
\vdots \\
\alpha_m C_\ell v_m \emat 
\]
is non-zero because of the Vandermonde structure and the fact that all $\lambda_i$'s  are distinct. Thus $\ybold_\ell$ is not the zero trajectory so that $\ell \in \supp(\ybold )$.
This implies that
\beq
\cup_{j=1}^m \supp (Cv_j ) \subset \supp ( \ybold ) . \label{pfSupp2}
\eeq
From~(\ref{pfSupp1}) and~(\ref{pfSupp2}) we conclude that~(\ref{resultSupp}) holds. 	
\end{proof}
A special consequence of the above lemma is that the system's security index $\delta(\Sigma)$ is determined by the weight of the special trajectories that have initial conditions in an eigenspace $V_i$ of $A$: 
\begin{thm}\label{thm_eigen}
Let the system $\Sigma$ be given by equations~(\ref{eq_Asystem})-(\ref{eq_Csystem}). Let $\lambda_1, \lambda_2 , \ldots , \lambda_m$ be the $m$ distinct eigenvalues of the matrix $A$. Let ${\mathbb C}^n = V_1 \cup  V_2 \cup \ldots \cup V_m$, where $V_i$ is the subspace spanned by all eigenvectors corresponding to $\lambda_i$ for $i=1,2,\ldots ,m$. 
Then
\[
\delta (\Sigma ) = \min_{j \in \{1,2,\ldots , m\}} \min_{x\in V_j} |\supp(Cx) |,
\]
In particular, if $m=n$, meaning that all $n$ eigenvalues of $A$ are distinct, then
\[
\delta (\Sigma ) =  \min_{j \in \{1,2,\ldots , n\}} |\supp(Cv_j)| ,
\]
where $v_1 , v_2 , \ldots , v_n$ is a basis of eigenvectors of $A$.
\end{thm}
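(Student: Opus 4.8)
The plan is to establish the formula by two matching inequalities, with Lemma~\ref{lemma_eigen} doing the heavy lifting in the lower bound. Throughout I read the hypothesis ${\mathbb C}^n = V_1 \cup \cdots \cup V_m$ as asserting that the eigenspaces span ${\mathbb C}^n$ (i.e.\ that $A$ is diagonalizable), so that every initial condition splits as a sum of eigenvectors; and I take the inner minimization over \emph{nonzero} $x \in V_j$, as the $m=n$ case with basis eigenvectors confirms.

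For the upper bound I would fix an index $j$ and a nonzero $x \in V_j$. Since $x$ is an eigenvector, $A^t x = \lambda_j^t x$, so the trajectory $\ybold \in \B$ with $x(0)=x$ satisfies $y(t)=\lambda_j^t Cx$ and hence $\supp(\ybold)=\supp(Cx)$, regardless of whether $\lambda_j=0$. Observability forces $Cx\neq 0$, so $\ybold$ is a nonzero trajectory of weight $|\supp(Cx)|$, and Definition~\ref{def_security} gives $\delta(\Sigma)\leq |\supp(Cx)|$. Minimizing over $x$ and $j$ delivers $\delta(\Sigma)$ at most the right-hand side.

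For the lower bound I would start from an arbitrary nonzero $\ybold\in\B$ with initial condition $x(0)$, which is nonzero by observability, and decompose $x(0)=u_1+\cdots+u_m$ with $u_j\in V_j$. Dropping the zero summands, the surviving $u_{j_1},\ldots,u_{j_k}$ (with $k\geq 1$) are honest eigenvectors attached to the distinct eigenvalues $\lambda_{j_1},\ldots,\lambda_{j_k}$, so Lemma~\ref{lemma_eigen} applies and yields $\supp(\ybold)=\bigcup_{l=1}^k \supp(Cu_{j_l})$. Since this union contains each individual support, $\|\ybold\|\geq |\supp(Cu_{j_1})|\geq \min_{0\neq x\in V_{j_1}}|\supp(Cx)|$, which is at least the right-hand side. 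As this holds for every nonzero $\ybold$, we conclude $\delta(\Sigma)$ is at least the right-hand side, and combining with the upper bound proves the equality. The specialization $m=n$ is then immediate: each $V_j$ is the line ${\mathbb C}v_j$, every nonzero element is a scalar multiple of $v_j$, and scaling leaves the support unchanged, so $\min_{0\neq x\in V_j}|\supp(Cx)|=|\supp(Cv_j)|$.

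The main obstacle is the lower bound, and it hinges entirely on Lemma~\ref{lemma_eigen}: the lemma is precisely the device that turns the union of the eigenvector-wise supports into the support of the full trajectory, which is what lets me bound $\|\ybold\|$ below by a single-eigenspace quantity rather than by the full union. The one point demanding care is that the hypothesis genuinely provides the eigenspace decomposition of $x(0)$; were $A$ not diagonalizable, initial conditions with generalized-eigenvector components would escape Lemma~\ref{lemma_eigen} and the argument would break.
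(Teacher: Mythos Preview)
Your proof is correct and follows essentially the same route as the paper: the upper bound comes from feeding a single eigenvector as initial condition and reading off $\|\ybold\|=|\supp(Cx)|$, and the lower bound comes from decomposing $x(0)$ into eigenvectors and invoking Lemma~\ref{lemma_eigen} to get $\supp(\ybold)=\bigcup\supp(Cv_j)$. You are in fact more careful than the paper on a couple of points (making the diagonalizability assumption explicit, noting that observability forces $Cx\neq 0$, and handling $\lambda_j=0$), but the architecture is identical.
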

\begin{proof}
Let $\ybold$ be an arbitrary trajectory in $\B$, corresponding to initial condition $x(0)$, written as in~(\ref{eq_alphas}). It follows from Lemma~\ref{lemma_eigen} that 
\[
\|\ybold \|= |\supp ( \ybold ) | \geq \min_{j \in \{1,2,\ldots , m\}} |\supp (Cv_j )| . 
\]
This implies that
\[
\delta (\Sigma ) \geq \min_{j \in \{1,2,\ldots , m\}} \min_{x\in V_j} |\supp(Cx) |.
\]
		To prove the reverse inequality, let $j \in \{1,2,\ldots , m\}$ and $x \in V_j$ be such that $|\supp(Cx)|$ is minimal. Let $\ybold \in \B$ be the system's trajectory that corresponds to initial condition $x(0) := x$. From $Ax = \lambda_j x$ it follows immediately that $\|\ybold \|= |\supp(Cx)|$. Therefore 
		\[
		\delta (\Sigma ) \leq \min_{j \in \{1,2,\ldots , m\}} \min_{x\in V_j} |\supp(Cx) | . 
		\]
		which proves the theorem.
\end{proof}
From the theorem above, we see that the computation of the security index is straightforward for a system that has $n$ distinct eigenvalues : simply use a diagonal $A$-matrix; the security index of the system is then given by the size of the support of the sparsest column of the corresponding $C$-matrix.  In particular, the system is maximally secure if and only if this $C$-matrix has no zero values. We now illustrate the computation of system $\Sigma$'s security index through an example. 
\begin{example}
Let $N=2$ and consider
\[
A= \bmat{cc} \lambda_1 & 0 \\ 0 & \lambda_2 \emat, \;\;\;\;C = \bmat{cc} 1 & 1 \\ 1 & -1 \emat .
\]
Then it follows from Theorem~\ref{thm_eigen} that for $\lambda_1 \neq \lambda_2$ the system's security index is $2$ so that by Theorem~\ref{thm_detect} detection of attacks on one output is possible. Minimal lag equations for this system are given by
\begin{equation}
	R(\sigma) \ybold = 
\bmat{cc} \sigma - \lambda_1 & \sigma - \lambda_1  \\  \sigma - \lambda_2 & -(\sigma - \lambda_2 ) \emat \bmat{c} \ybold_1 \\ \ybold_2 \emat = 0. \nonumber
\end{equation}
We observe that when $\lambda_{1}\neq \lambda_2$, every column of $R(\sigma)$ is left unimodular. Hence, by Theorem \ref{thm_RJ}, we obtain $L=N$ and thus, the security index $\delta(\Sigma)$ equals $N$.
\end{example}

\section{Attack detection, correction and other future work}
Using the results developed in the previous sections, we now discuss how they can be employed in detecting the occurrence of an attack and how correction can be performed.

To start with detection, we first recall that the observability assumption on our system implies that a check matrix $H$ (defined in \eqref{eq_Hi}) exists. Attack detection is most easily formulated in terms of $H$ as follows.

\begin{quote} 
{\em Attack detection rule:} Given received signal $\rbold \in \B'$, compute the `syndrome trajectory' $\sbold := H \rbold $ and conclude that an attack has taken place (meaning $\rbold \notin \B$) if and only if $\sbold \neq 0$. 
\end{quote}

Alternatively, we can use the description of the form $R(\sigma ) \ybold =0$ as the main ingredient of our decision rule to detect/correct.   
\begin{quote}
{\em Attack detection rule:} Given received signal $\rbold \in \B'$, compute the `syndrome trajectory' $\sbold := R(\sigma )\rbold $ and conclude that an attack has taken place (meaning $\rbold \notin \B$) if and only if $\sbold \neq 0$. 
\end{quote}
In this paper, we choose to formulate all fundamental notions and results in terms of the trajectory $\ybold$ rather than in terms of the initial condition $x(0)$ so as to have a clear fundamental theory that exhibits the link with error control coding in a transparent way.  Indeed, we are able to make this choice because for attack correction the recovery of $x(0)$ is equivalent to the recovery of $\ybold \in \B$ due to the observability assumption on the system $\Sigma$. It is however important to note that in practical situations the recovery of $x(0)$ from the received trajectory $\rbold$ is the main objective. In fact, we only need to consider attack scenarios where attacks happen in the first $n$ time instances (otherwise we can simply reconstruct $x(0)$ from $(r(0) , r(1), \cdots , r(n-1))=(y(0) , y(1), \cdots , y(n))$ because of the observability of $\Sigma$).

For attack correction, the above syndrome trajectory $\sbold$ should be used to identify the attack locations.  Once the attack locations have been found, $x(0)$ can be computed on the basis of the remaining attack-free components of $\ybold$. How to identify the attack locations from $\sbold$ in terms of our knowledge of the system matrices $A$ and $C$ is a topic of future research. Another topic of future research is the effect of feedback control on the security index of an LTI system with inputs.

\addtolength{\textheight}{-12cm}   





\bibliographystyle{plain}
\bibliography{code}

\end{document}